\newtheorem{proposition}{Proposition}
\newcommand{\q}{\phantom0}
\newcommand{\qq}{\phantom{00}}
\def\mmod{~\textrm{mod}~}
\begin{document}
\firstpage{1}

\subtitle{Sequence Analysis}

\title[copMEM: Finding maximal exact matches...]{copMEM: Finding maximal exact matches via sampling both genomes}
\author[Szymon Grabowski and Wojciech Bieniecki]{Szymon Grabowski\,$^{\text{\sfb 1}}$ and Wojciech Bieniecki\,$^{\text{\sfb 1,}*}$}
\address{$^{\text{\sf 1}}$Institute of Applied Computer Science, Lodz University of Technology, Al.\ Politechniki 11, 90-924 {\L}\'{o}d\'{z}, Poland.}

\corresp{$^\ast$To whom correspondence should be addressed.}

\history{Received on XXXXX; revised on XXXXX; accepted on XXXXX}
\editor{Associate Editor: XXXXXXX}

\abstract{
\textbf{Summary:} 
Genome-to-genome comparisons require designating anchor points, which are given by Maximum Exact Matches (MEMs) between their sequences.
For large genomes this is a challenging problem and the performance of existing solutions, even in parallel regimes, is not quite satisfactory.
We present a new algorithm, copMEM, that allows to sparsely sample both input genomes, with sampling steps being coprime.
Despite being a single-threaded implementation, copMEM computes all MEMs of minimum length 100 between the human and mouse genomes in less than 2 minutes, using less than 10\,GB of RAM memory.\\
\textbf{Availability:} 
The source code of copMEM is freely available at 
\url{https://github.com/wbieniec/copmem}.\\
\textbf{Contact:} \href{wbieniec@kis.p.lodz.pl}{wbieniec@kis.p.lodz.pl}
}

\maketitle

\section{Introduction}
Maximal exact matches (MEMs) are exact matches between
two strings (genomes) that cannot be extended to the left or right 
without producing a mismatch. 
For high-throughput sequencing data, finding MEMs has two 
basic applications: 
$(i)$ seeding alignments of sequencing reads for genome assembly, and
$(ii)$ designating anchor points for genome-genome comparisons.

Early algorithms for MEM finding were based on a suffix tree~\citep{KPDSSAS2004} 
or an enhanced suffix array~\citep{AKO2004}. 
The space occupancy of these data structures instigated 
researchers to devise more succinct (and possibly also faster) solutions, 
including essaMEM~\citep{VDBFD2013} based on a sparse suffix array 
and E-MEM~\citep{KI2015}, which employs a hash table for sampled $K$-mers. 
E-MEM is the most succinct and also often the fastest solution.
Moreover, it allows to process the data in several passes, trading speed 
for a reduction in working memory.

Recently, Almutairy and Torng~(\citeyear{AT2018}) compared two approaches to 
sampling $K$-mers in order to find MEMs: fixed sampling and minimizer sampling.
The former, which is the approach of E-MEM, 
extracts $K$-mers from one of the input sequences in fixed  
sampling steps and inserts into a dictionary.
Then, all the $K$-mers from the other genome are compared against the seeds 
in the dictionary, in order to extend the matches.
The latter approach involves so-called minimizers,
and allows for sampling both genomes, but 
the number of shared $K$-mer occurrences is greater than with fixed sampling.
Although minimizer sampling may be slightly faster, it needs more space, 
hence the authors concluded that fixed sampling was the right way to go.

In this article, we propose a novel approach to sampling $K$-mers in both sequences, 
which combines speed and simplicity.
Our idea is based on
simple properties of coprime numbers.

\begin{table*}[ht]
\processtable{MEM results. Times in seconds, memory (RAM) usages in GBs (G = $10^{9}$).
\label{tab:mem}}
{
\renewcommand{\tabcolsep}{0.8em}
\centering
\begin{tabular}{lccccccccccccccc} \toprule
MEM alg.	&& \multicolumn{4}{c}{{\em H.\ sapiens} vs {\em M.\ musculus}} && \multicolumn{4}{c}{{\em H.\ sapiens} vs {\em P.\ troglodytes}} && \multicolumn{4}{c}{{\em T.\ aestivum} vs {\em T.\ durum}} \\ \cline{3-6} \cline{8-11} \cline{13-16} 
          && \multicolumn{2}{c}{$L=100$} & \multicolumn{2}{c}{$L=300$} && \multicolumn{2}{c}{$L=100$} & \multicolumn{2}{c}{$L=300$} && \multicolumn{2}{c}{$L=100$} & \multicolumn{2}{c}{$L=300$} \\
			  && Time	& RAM & Time	& RAM && Time	& RAM & Time	& RAM && Time & RAM & Time	& RAM  \\
\midrule
essaMEM -t 10	-k 4 &&   1779 &  14.03 &  1030 &  13.59 &&  3474 &  13.93 &  1544 &  13.97 &&  ---     & --- & --- & --- \\
eMEM -t 1   	     &&  \q976	& \q3.95 & \q575 & \q2.18 &&  2038 & \q4.10 & \q906 & \q2.34 && \q728 & \q5.65 & \q576 & \q3.17 \\
eMEM -t 10  	     &&  \q216	& \q4.03 & \q146 & \q2.26 && \q509 & \q4.18 & \q205 & \q2.41 && \q219 & \q5.72 & \q169 & \q3.24 \\
copMEM             &&  \q116 & \q9.70 & \qq55 & \q8.92 && \q382 &  10.92 & \qq93 & \q9.52 && \q223 &  17.88 & \q107 &  15.65 \\
\botrule
\end{tabular}
}
{Test platform: Intel i7-4930K 3.4\,GHz CPU, 64\,GB of DDR3-1333 RAM, SSD\,512 GB, Ubuntu 17.10 64-bit OS. 
All codes written in C++ and compiled with g++ 7.2.0 -march=native -O3 -funroll-loops.
I/O times are included. In the cases denoted with `---' essaMEM produced a multi-GB temp file and we had to stop it after more than an hour.}
\end{table*}

\begin{methods}

\section{Methods}
Given two sequences, $R_{0\ldots n_1 - 1}$ and $Q_{0\ldots n_2 - 1}$, 
the task is to find all MEMs of length at least $L$ symbols.
Following E-MEM,we sample $K$-mers from the reference 
genome $R$ with a fixed sampling step.
E-MEM sets the sampling step to $L - K + 1$ and inserts all sampled 
$K$-mers (seeds) in a hash table; the step choice guarantees that for any window 
of size $L$ one (and only one) $K$-mer will be sampled.
Then the query genome $Q$ is scanned, with the step equal to 1; 
all matching sampled $K$-mer occurrences from $R$ are found in the hash table  
and the tentative matches are left- and right-extended, 
to check if they are long enough.

Unlike E-MEM, our solution samples both genomes with step greater than 1.
To this end, we choose two positive integer parameters, $k_1$ and $k_2$, 
such that $gcd(k_1, k_2) = 1$ (where $gcd$ stands for the 
greatest common divisor) and $k_1 \times k_2 \leq L - K + 1$.
The $K$-mers from genome $R$ are extracted with step $k_1$ and inserted in a hash table.
The genome $Q$ is scanned with step $k_2$ and similarly its $K$-mers are extracted, 
candidate seeds are found in the hash table and then left- and right-extended.
As the key mechanism is based on coprimality of the parameters $k_1$ and $k_2$, 
we denote our algorithm as copMEM.

The correctness of our procedure implies from the following proposition.

\begin{proposition}
Let $k_1$ and $k_2$ be two positive integers that are coprime.
Let $r_1 \in \{0, 1, \ldots, k_1 - 1\}$. 
For any $r \in \{0, 1, \ldots, k_2 - 1\}$, 
the set 
$\mathcal{S} = \{i k_1 + r_1~|~i = 0, 1, \ldots, k_2 - 1\}$
contains an element of the form $j k_2 + r$, for some $j \geq 0$.
\end{proposition}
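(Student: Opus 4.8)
The plan is to reduce the assertion to a counting statement about residues modulo $k_2$. First I would note that $\mathcal{S}$ consists of exactly $k_2$ nonnegative integers, namely $ik_1 + r_1$ for $i = 0, 1, \ldots, k_2 - 1$. Since $0 \le r < k_2$, a nonnegative integer $m$ can be written as $jk_2 + r$ with $j \ge 0$ precisely when $m \equiv r \pmod{k_2}$, because $r$ is the least nonnegative representative of its residue class. Thus it suffices to produce an element of $\mathcal{S}$ congruent to $r$ modulo $k_2$, and I would in fact show the stronger fact that the elements of $\mathcal{S}$ hit every residue class modulo $k_2$.

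The key step is to examine the map $\varphi\colon i \mapsto (ik_1 + r_1) \bmod k_2$ on the index set $\{0, 1, \ldots, k_2 - 1\}$ and to prove that it is injective. Suppose $ik_1 + r_1 \equiv i'k_1 + r_1 \pmod{k_2}$; then $k_2 \mid (i - i')k_1$, and since $\gcd(k_1, k_2) = 1$, Euclid's lemma forces $k_2 \mid (i - i')$. As $|i - i'| < k_2$, this gives $i = i'$. This is the one and only place where coprimality is used, and it is the crux of the whole argument, even though it is entirely elementary.

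From injectivity I would conclude, because $\varphi$ is a self-map of the finite set of residues $\{0, 1, \ldots, k_2 - 1\}$, that $\varphi$ is a bijection and hence surjective. Consequently there is an index $i$ with $ik_1 + r_1 \equiv r \pmod{k_2}$, and by the first paragraph the corresponding element $ik_1 + r_1 \in \mathcal{S}$ has the desired form $jk_2 + r$ with $j \ge 0$.

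I anticipate no genuine obstacle here, as the statement is a standard consequence of $k_1$ being invertible modulo $k_2$. The only point deserving a moment's care is to verify that a congruence $m \equiv r \pmod{k_2}$ together with $m \ge 0$ and $0 \le r < k_2$ yields a representation with $j \ge 0$ rather than a possibly negative integer $j$; this is exactly why the range $0 \le r < k_2$ matters. Should the counting argument feel indirect, an alternative is to construct $i$ explicitly from a Bézout identity $ak_1 + bk_2 = 1$, taking $i \equiv (r - r_1)a \pmod{k_2}$ and checking directly that $ik_1 + r_1 \equiv r \pmod{k_2}$.
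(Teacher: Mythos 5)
Your proof is correct and takes essentially the same route as the paper's: both show that the map $i \mapsto (ik_1 + r_1) \bmod k_2$ is injective on $\{0,1,\ldots,k_2-1\}$ using coprimality (Euclid's lemma), then conclude by a pigeonhole/counting argument that every residue modulo $k_2$ is attained. Your added remark verifying that $m \equiv r \pmod{k_2}$ with $m \ge 0$ and $0 \le r < k_2$ indeed yields $j \ge 0$ is a small point of rigor the paper leaves implicit, but the core argument is identical.
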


\begin{proof}
Let us create 
$\mathcal{S'} = \{(i k_1 + r_1) \mmod k_2~|~i = 0, 1, \ldots, k_2 - 1\}$.
All elements of $\mathcal{S'}$ are distinct.
Indeed, were it not the case, we would have $k_2 | v k_1 - u k_1$ 
for some $u$, $v$, such that $0 \leq u < v < k_2$,
which, in light of the coprimality of $k_1$ and $k_2$, implies that 
$k_2 | v - u$, a contradiction.
As the size of $\mathcal{S'}$ is $k_2$, all
$r$, $0 \leq r < k_2$, occur in it.
\qed
\end{proof}

Let $R_{i \ldots i+L'-1}$ and $Q_{j \ldots j+L'-1}$, where $L' \geq L$, 
form a MEM.
Denote $r_1 = i \mmod k_1$ and $r_2 = j \mmod k_2$.
If $R_{i \ldots i+L'-1}$ fully contains at least $k_2$ $K$-mers sampled with step $k_1$,
then, by the proposition above, 
for (at least) one of them, $R_{i' \ldots i'+K-1}$, we have $i' \mmod k_2 = r_2$.
As $R_{i \ldots i+L'-1}$ and $Q_{j \ldots j+L'-1}$ are equal, 
and all $K$-mers sampled from $Q_{j \ldots j+L'-1}$ start at position $j'$ such 
that $j' \mmod k_2 = r_2$, the $K$-mer $Q_{j+i-i' \ldots j+i-i'+L'-1}$ is among 
the sampled $K$-mers and it is equal to $R_{i' \ldots i'+K-1}$.
We thus showed that our sampling scheme cannot miss a seed for a matching window, 
provided appropriate choice of the parameters.
To this end, and to minimize the number of sampled positions (assuming also that 
the input genomes are of similar length, which is usually the case), 
we set $k_1 = \lceil \sqrt{L - K + 1} \rceil$ and $k_2 = k_1 - 1$.
%



\section{Results}
To evaluate the performance of copMEM, we chose the same large real datasets 
as used in the E-MEM paper.
The datasets are in multi-FASTA format, with sizes ranging from 2.7\,Gbs to 4.5\,Gbs.
Supplementary Material contains the dataset URLs and characteristics, 
as well as details on the test methodology.
MEM finding times and RAM usages are given in Table~\ref{tab:mem}.

The parameter -t for E-MEM and essaMEM is the number of threads.
We can see that despite the fact that copMEM in the current implementation 
is single-threaded, it is usually much faster than its competitors 
(running them with one thread yields close to an order of magnitude difference). 
In memory use, E-MEM is more frugal.
The amount of space that copMEM occupies in the RAM memory 
is roughly described (in bytes) by the following formula:
$|R| + |Q| + 4 \times (2^{29} + |R| / k_1) + |output|$, 
where $2^{29}$ is the number of slots in the hash table.


\end{methods}

\section*{Acknowledgement}

\paragraph{Funding\textcolon}
This work was financed by the Lodz University of Technology, 
Faculty of Electrical, Electronic, Computer and Control Engineering 
as a part of statutory activity (project no.\ 501/12-24-1-5418).



\clearpage
\onecolumn

\begin{center}
{\Large Supplementary Material to:\\
``copMEM: Finding maximal exact matches via sampling both genomes''\\
by\\
Szymon Grabowski and Wojciech Bieniecki}
\end{center}

\section*{Used datasets}

Please download the following datasets and extract them (each to single directory). 
If an archive contains multiple files, they have to be concatenated to be used 
as one of the two input files for copMEM (or other MEM tools in our test procedure).

\begin{sloppypar}
\begin{enumerate}
\item Homo sapiens\\
\url{ftp://hgdownload.cse.ucsc.edu/goldenPath/hg19/bigZips/chromFa.tar.gz} \\
(ca. 900\,MB of gzipped size)\\
\item Mus musculus\\
\url{ftp://hgdownload.cse.ucsc.edu/goldenPath/mm10/bigZips/chromFa.tar.gz} \\
(ca. 830\,MB of gzipped size)\\
\item Pan troglodytes\\
\url{ftp://hgdownload.cse.ucsc.edu/goldenPath/panTro3/bigZips/panTro3.fa.gz} \\
(ca. 900\,MB of gzipped size)\\
\item Triticum aestivum\\
\url{ftp://ftp.ensemblgenomes.org/pub/plants/release-22/fasta/triticum_aestivum/dna/Triticum_aestivum.IWGSP1.22.dna.genome.fa.gz} \\
(ca. 1.3\,GB of gzipped size)\\
\item Triticum durum\\
\url{https://urgi.versailles.inra.fr/download/iwgsc/TGAC_WGS_assemblies_of_other_wheat_species/TGAC_WGS_durum_v1.fasta.gz} \\
(ca. 970\,MB of gzipped size)\\
\end{enumerate}
\end{sloppypar}

Basic characteristics of the datasets are presented in Table~\ref{table:datasets}.

\section*{How to run copMEM}

Assuming that the input genomes are named human.all.fa and mus.all.fa, where the former is the reference and the latter the query genome,
an exemplary command line may look like:
\begin{verbatim}
./copmem -o hm.mems -l 300 human.all.fa mus.all.fa >hm_times.txt
\end{verbatim}

You can also use the switch \texttt{-v} (verbose), which prints more details.

\section*{Experiments}

The tests were conducted on two machines.
One was equipped with an Intel i7-4930K 3.4\,GHz (max turbo frequency 3.9\,GHz) CPU, 
64\,GB of DDR3-1333 RAM, SSD\,512 GB (Samsung 840 Pro), plus two HDDs,
and was running Linux Ubuntu 17.10 64-bit OS.
The sources were compiled with \texttt{g++ 7.2.0 -march=native -O3 -funroll-loops}.

The other was a 64-bit Windows 10 machine, 
with an Intel i5-6600 3.3\,GHz (max turbo frequency 3.9\,GHz) CPU, 
32\,GB of DDR4-2133 RAM and a fast 512\,GB SSD disk (Samsung 950 PRO NVMe M.2).
This computer was used only for running copMEM,
and the sources were compiled with Microsoft Visual Studio 2017, 
in the 64-bit release mode and using the following switches: /O2 /arch:AVX2.

The presented timings in the main part of the paper, 
as well as those given in Table~\ref{table:hash} below,
are median times of 3 runs in a particular setting 
(pair of datasets, chosen $L$ and possibly the number of threads).
Before each run, the memory of the Linux machine was filled with some dummy value, 
to flush disk caches.
In other words, all tests were run with a ``cold cache''.

\begin{table}[!ht]
\caption{Datasets used in the experiments.}
\vspace{2mm}
\label{table:datasets}
\centering
\setlength{\tabcolsep}{1.6em}
\begin{tabular}{lrr}
\hline
Dataset  & Size (MB) & Sequences \\
\hline
{\em Homo sapiens} (human) & 3,137 & 93 \\
{\em Mus musculus} (mouse) & 2,731 & 66 \\
{\em Pan troglodytes} (chimp) & 3,218 & 24,132 \\
{\em Triticum aestivum} (common wheat) & 4,391 & 731,921 \\
{\em Triticum durum} (durum wheat) & 3,229 & 5,671,204 \\
\hline
\end{tabular}
\end{table}

\smallskip

\begin{table}[!ht]
\caption{The impact of the used hash function. Times given in seconds. Experiments were run on the Windows machine.}
\vspace{2mm}
\label{table:hash}
\centering
\setlength{\tabcolsep}{1.6em}
\begin{tabular}{lrrrrr}
\hline
Task                                     &      default &     xx32 &     xx64 &     Metro64 &    City64 \\
\hline
{\em H.sap} vs. {\em M.mus}, $L = 100$   &        70.65 &    79.90 &    82.13 &       80.20 &      72.39 \\
{\em H.sap} vs. {\em M.mus}, $L = 300$   &        32.18 &    36.65 &    37.65 &       38.09 &      34.06 \\ 
{\em H.sap} vs. {\em P.tro},   $L = 100$ &       299.35 &   305.22 &   304.88 &      300.91 &     291.96 \\
{\em H.sap} vs. {\em P.tro},   $L = 300$ &        62.00 &    65.64 &    66.39 &       65.61 &      62.16 \\
{\em T.aest} vs {\em T.durum}, $L = 100$ &       151.30 &   172.70 &   176.37 &      167.96 &     156.31 \\ 
{\em T.aest} vs {\em T.durum}, $L = 300$ &        75.46 &    83.52 &    85.28 &       83.99 &      75.89 \\
\hline
\end{tabular}
\end{table}

Table~\ref{table:hash} presents the impact of the chosen hash function on the overall performance 
of copMEM.
Of course, the memory usage is unaffected, yet the speeds vary somewhat.
The tested hash functions include:

\begin{itemize}
\item xxHash by Yann Collet (\url{https://github.com/Cyan4973/xxHash}), both in 32-bit and 64-bit version,
\item MetroHash64 by J. Andrew Rogers (\url{https://github.com/jandrewrogers/MetroHash}),
\item CityHash64 from Google (\url{https://github.com/google/cityhash}).
\end{itemize}

The default hash function (used in the ``final'' copMEM version), is taken from \url{http://www.amsoftware.narod.ru/algo2.html}
(with slight changes in the code) and denoted there as maRushPrime1Hash.

\section*{Final notes}

Apart from the main algorithmic idea, several factors allowed us to achieve 
high performance.
It is important to use a large enough value of $K$ (set to 44 in copMEM)
and an efficient hash function.
A standard hash table (with chaining), as being a dynamic data structure, 
is not a good choice, due to millions of small data allocations.
Instead, we scan $R$ in two passes, first counting the number of occurrences of each hash 
value and then writing the sampled positions into appropriate locations of an array. 
This strategy, resembling the textbook counting sort, 
avoids memory fragmentation.
Finally, in a few places we applied software prefetching and other low-level optimizations.


\begin{thebibliography}{}
\scriptsize

\bibitem[Abouelhoda {\it et~al.}(2004)Abouelhoda, Kurtz, and Ohlebusch]{AKO2004}
Abouelhoda, M.~I., Kurtz, S. and Ohlebusch, E. (2004).
\newblock Replacing suffix trees with enhanced suffix arrays.
\newblock {\em Journal of Discrete Algorithms}, {\bf2}(1): 53--86.

\bibitem[Almutairy and Torng(2018)Almutairy and Torng]{AT2018}
Almutairy, M. Torng, E. (2018).
\newblock Comparing fixed sampling with minimizer sampling when using $k$-mer indexes to find maximal exact matches.
\newblock {\em PLoS One\/}, {\bf13}(2):e0189960.

\bibitem[Khiste and Ilie(2015)Khiste, Ilie]{KI2015}
Khiste, N., Ilie, L. (2015).
\newblock E-MEM: Efficient computation of Maximal Exact Matches for very large genomes.
\newblock {\em Bioinformatics\/}, {\bf31}(4): 509--514.

\bibitem[Kurtz {\it et~al.}(2004)Kurtz, Phillippy, Delcher, Smoot, Shumway, Antonescu, and Salzberg]{KPDSSAS2004}
Kurtz, S., Phillippy, A., Delcher, A.~L., Smoot, M., Shumway, M., Antonescu, C. and Salzberg, S.~L. (2004).
\newblock Versatile and open software for comparing large genomes.
\newblock {\em Genome Biology\/}, {\bf5}(2): R12.


\bibitem[Vyverman {\it et~al.}(2013)Vyverman, De Baets, Fack, and Dawyndt]{VDBFD2013}
Vyverman, M., De Baets, B., Fack, V. and Dawyndt, P. (2013).
\newblock essaMEM: finding Maximal Exact Matches using enhanced sparse suffix arrays. 
\newblock {\em Bioinformatics\/}, {\bf29}(6): 802--804.

\end{thebibliography}
\end{document}